\long\def\symbolfootnote[#1]#2{\begingroup
\def\thefootnote{\fnsymbol{footnote}}\footnote[#1]{#2}\endgroup}
\newtheorem{theorem}{Theorem}
\newcommand{\In}{\mathsf{In}}
\newcommand{\Mid}{\mathsf{Mid}}
\newcommand{\Out}{\mathsf{Out}}
\newcommand{\DC}{\mathsf{DC}}
\newcommand{\Source}{\mathsf{S}}
\title{Exact Minimum-Repair-Bandwidth Cooperative Regenerating Codes for Distributed Storage Systems}
\author{Kenneth W. Shum and Yuchong Hu \\ Institute of Network Coding, \\ The Chinese University of Hong Kong. \\
Email: \texttt{\{wkshum,ychu\}@inc.cuhk.edu.hk}.
}
\begin{document}

\maketitle

\begin{abstract} In order to provide high data reliability, distributed storage systems disperse data with redundancy to multiple storage nodes. Regenerating codes is a new class of erasure codes to introduce redundancy
for the purpose of  improving the data repair performance in distributed storage. Most of the studies on regenerating codes focus on the single-failure recovery, but it is not uncommon to see two or more node failures at the same time in large storage networks. To exploit the opportunity of repairing multiple failed nodes simultaneously,  a cooperative repair mechanism, in the sense that the nodes to be repaired can exchange data among themselves, is investigated. A lower bound on the repair-bandwidth for cooperative repair is derived and a construction of a family of exact cooperative regenerating codes matching this lower bound is presented.
\symbolfootnote[0]{This work was partially supported by a grant from the University Grants Committee of the Hong Kong Special Administrative Region, China (Project No. AoE/E-02/08).}
\end{abstract}

\begin{keywords} Distributed Storage, Regenerating Codes, Erasure Codes, Repair-Bandwidth, Network Coding.
\end{keywords}

\section{Introduction}

Distributed storage systems such as Oceanstore~\cite{Oceanstore} and Total Recall~\cite{Totalrecall} provide reliable and scalable solutions to the increasing demand of data storage. They distribute data with redundancy to multiple storage nodes and the data can be retrieved even if some of nodes are not available. When erasure coding is used as a redundancy scheme in distributed storage, the task of repairing a node failure becomes non-trivial. A traditional way to repair a failed node is to download and reconstruct the whole data file first, and then regenerate the lost content (e.g., RAID-5, RAID-6). Since the size of the original data file may be huge, a lot of traffic is consumed for the purpose of repairing just one failed node.

In order to reduce the total traffic required for repairing, called {\em repair-bandwidth}, a new class of erasure codes, called {\em regenerating codes}~\cite{DGWR2010}, is presented and has a significantly lower traffic consumed in regenerating a failed node. The main idea of regenerating codes is to reduce repair-bandwidth from the survival nodes to a new node (called a {\em newcomer}), which regenerates the lost content in the failed node. Some constructions of minimum repair-bandwidth regenerating codes are given in~\cite{RSKR09, SK11}. They are based on {\em exact repair} or called {\em exact} MBR codes, which means the lost content of the failed node are repaired exactly.

Most of the studies on regenerating codes in the literature are for the single-failure recovery or one-by-one repair. When the number of storage nodes becomes large, the multi-failure case is not infrequent, and we need to regenerate several failed nodes at the same time. In addition, in practical systems such as Total Recall, a recovery process is triggered only after the total number of failed nodes has reached a predefined threshold.  These facts motivates the regeneration of multiple failed nodes jointly, instead of repairing in a one-by-one manner. A repair process in which the newcomers may exchange packets among themselves, called a {\em cooperative repair} or {\em cooperative recovery}, is first introduced in~\cite{HXWZL10}. We will call the regenerating codes for multiple failures with cooperative repair {\em cooperative regenerating codes}. In~\cite{WXHO10}, a special class of cooperative regenerating codes is proposed, in which the newcomers can select survival nodes for repairing flexibly. In~\cite{Shum11}, an explicit construction of cooperative regenerating code minimizing the storage in each node is given.

The tradeoff spectrum between repair-bandwidth and storage for cooperative regenerating codes is given in~\cite{Shum11, KSS}.
Regenerating codes which attain one end of this spectrum, corresponding to the minimum storage,  are considered in~\cite{HXWZL10, WXHO10}.
In this paper, we focus on the other end of this spectrum. Codes which minimizes repair-bandwidth is called {\em Minimal Repair-Bandwidth Cooperative Regenerating} (MBCR) codes.

{\em Main Results:} After presenting a simple example and demonstrating the basic ideas in Section~\ref{sec:example}, we derive in Section~\ref{sec:LB} a lower bound on the repair-bandwidth in cooperative recovery.  An explicit construction of a family of exact MBCR codes matching this lower bound is given in Section~\ref{sec:explicit}.


\section{An Illustrative Example} \label{sec:example}
In this section, we  introduce  some notations and illustrate the basic idea of cooperative repair.

Based on the system model introduced in~\cite{DGWR2010} and~\cite{HXWZL10}, a file consisting of $B$ packets is encoded and distributed to $n$ nodes and a data collector can retrieve the file by downloading data from any $k$ of $n$ nodes. When $r$ nodes fails, $r$ newcomers are selected to repair the failed nodes. The repair process is divided into two phases. In the first phase, each of the $r$ newcomers connects to $d$ surviving nodes and downloads some packets. In the second phase, the newcomers exchange some
packets among themselves. The objective is to minimize the total number of the packets transmitted (i.e., repair-bandwidth) in the two phases. Next we give an illustration of cooperative repair with parameters $n=4$ and $d=k=r=2$.

We initialize the distributed storage system by dividing a data file into eight data packets $A$, $B,\ldots, H$, and distribute them to four storage nodes. Each node stores five packets: four systematic and one parity-check (Fig.~\ref{fig:CR}). The addition ``+'' is bit-wise exclusive-OR (XOR). The first node stores the first four packets $A$, $B$, $C$, $D$, skips the packet $E$, and stores the sum of the next two packets, $F+G$. The content of nodes 2,  3 and 4 can be obtained likewise by shifting the encoding pattern to the right respectively by 2, 4 and 6 packets. It is easy to verify that a data collector can rebuild the file from any two of four nodes in the illustrated code. For example, the data collector which connects to nodes 1 and 2 can reconstruct the eight data packets by downloading $A$, $B$, $C$ and $F+G$ from node 1, and $D$, $E$, $F$, and $H+A$ from node 2. Then it can solve for $G$ by subtracting $F$ from $F+G$, and $H$ by subtracting $A$ from $H+A$.

\begin{figure}
\centering
\includegraphics[width=2.9in]{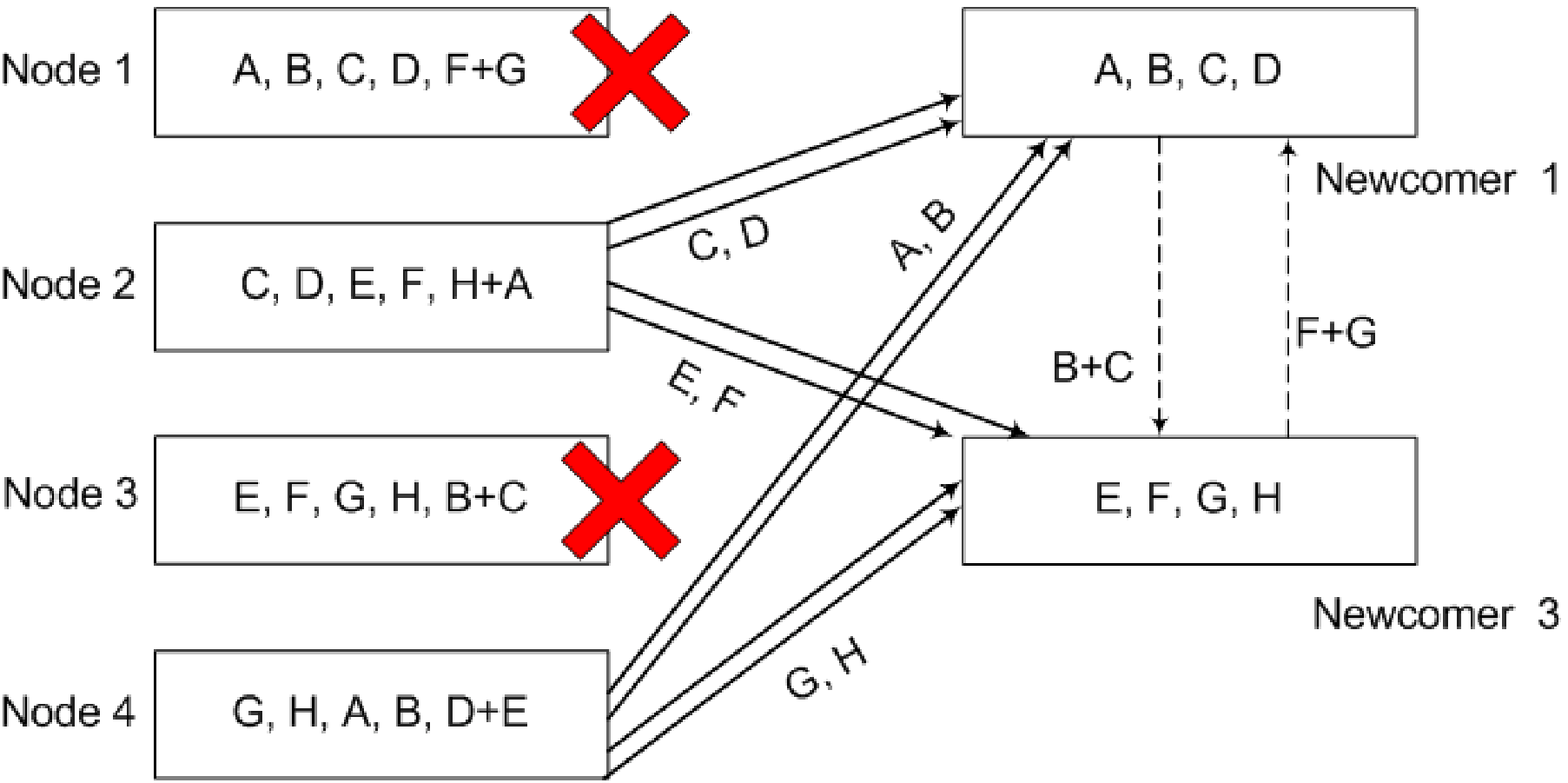}

\vspace{0.8cm}

\includegraphics[width=2.9in]{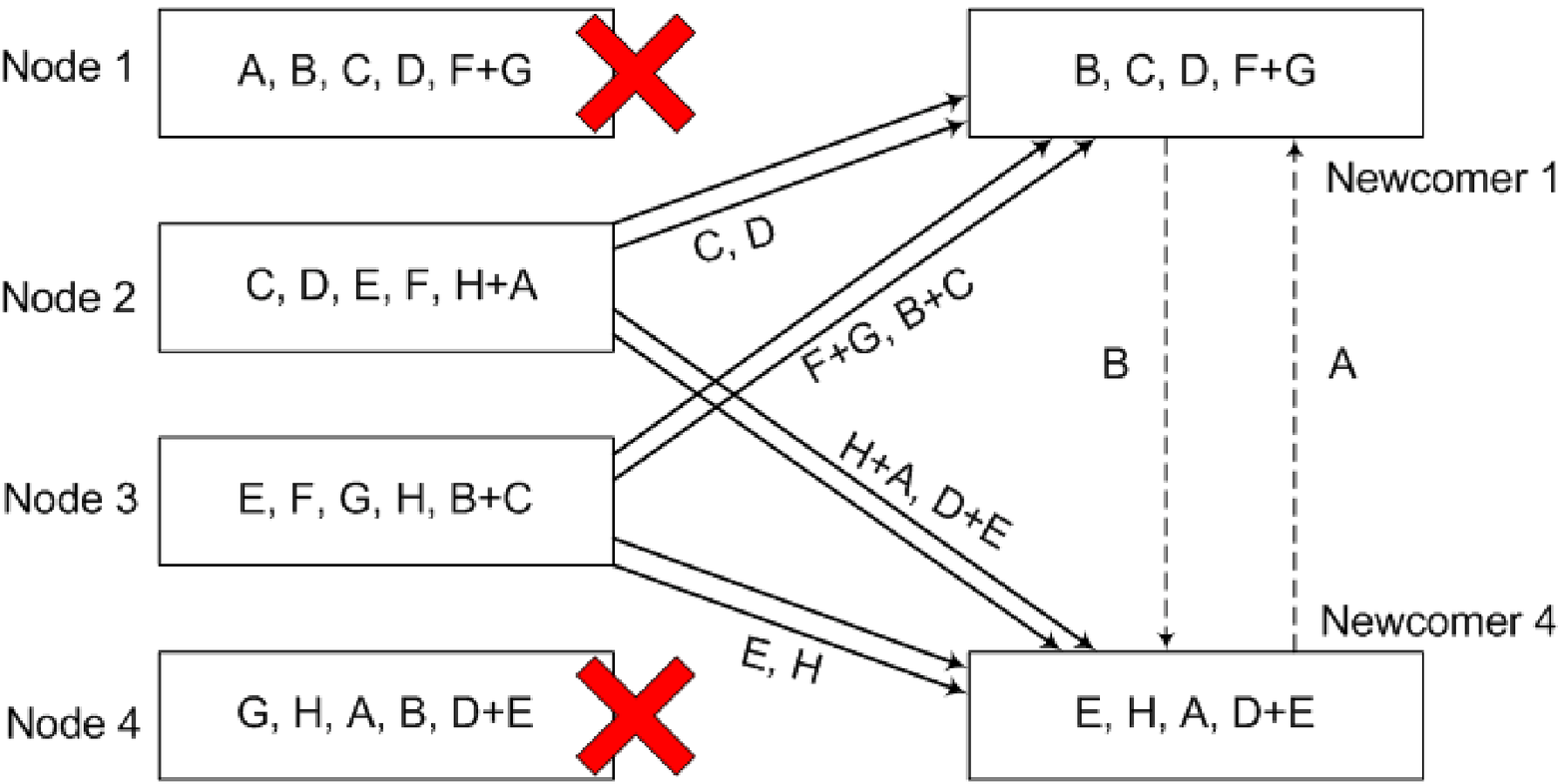}
\caption{An example of cooperative repair. The labels of the solid (resp. dashed) arrows indicate the packets transmitted during the first (resp. second) phase of the repair process. The content of the newcomers after the first phase of the repair process is shown.} \label{fig:CR}
\end{figure}

As for the repair process, the illustrated code costs ten packets per any two-failure recovery. Suppose that nodes 1 and 3 fail (see the first diagram in Fig.~\ref{fig:CR}). Both newcomers 1 and 3 first download four packets from the survival nodes 2 and~4. Then newcomer 1 (resp. newcomer 3) computes the sum $B+C$ (resp. $F+G$) and sends it to newcomer~3 (resp. newcomer~1). Obviously, a total of ten packets, which are equal to the number of lines (including solid and dashed lines), are transmitted in the network. Similarly, should nodes 2 and 4 fail, the same repair-bandwidth is consumed for regeneration. Suppose that node 1 and 4 fail (see the second diagram in Fig.~\ref{fig:CR}). Both newcomers 1 and 4 first download four packets from the survival nodes 2 and 3. Note that among the four downloaded packets, newcomer 1 (resp. newcomer 4) receives one encode packet $F+G$ (resp. $D+E$) from node 3 (resp. node 2). Then newcomer 1 solves for packet $B$ by subtracting $C$ from $B+C$, and transmits packet $B$ to newcomer~4. Also, newcomer 4 solves for packet $A$ and sends it to newcomer~1. Clearly, a total of ten packet transmissions are sufficient. Similarly, if any pair of two adjacent storage nodes fail, we can also repair them with ten packet transmissions, using the symmetry in the encoding for data distribution.

\section{Lower Bound on Repair-Bandwidth for Multi-Loss Cooperative Repair} \label{sec:LB}
In this paper, we assume that the storage nodes are symmetrical; for the storage cost,  each node stores $\alpha$ packets, and for the repair-bandwidth, each newcomer connects to $d$  existing nodes and downloads $\beta_1$ packets from each of them, and then sends $\beta_2$ packets to each of the $r-1$ other newcomers. In this paper, we only consider the case that $d\geq k$. The repair-bandwidth per newcomer, denoted by $\gamma$, is defined as the total number of the packets each newcomer receives, and thus is equal to
$$ \gamma = d\beta_1 + (r-1)\beta_2.$$
The aim of this section is to derive a lower bound on $\gamma$.

To formulate the problem, we draw an information flow graph as in~\cite{HXWZL10}.
Given parameters $n$, $k$, $d$, $r$, $\alpha$, $\beta_1$ and $\beta_2$, we construct an information flow graph $G=(\mathcal{V},\mathcal{E})$ as follows. The vertices are grouped into stages.

\begin{itemize}
\item In stage $-1$, there is only one vertex $\Source$, representing the source node which has the original file.

\item In stage 0, there are $n$ vertices $\Out_1$, $\Out_2,\ldots, \Out_n$, each of them corresponds to an initial storage node. There is a directed edge with capacity $\alpha$ from $\Source$ to each $\Out_i$.

\item For $t=1,2,3,\ldots$, suppose $r$ nodes fail in stage~$t$. Let the indices of these $r$ storage nodes be $\mathcal{S}_t = \{j_1$, $j_2, \ldots, j_r\}$. For each $i \in \mathcal{S}_t$, we put three vertices $\In_i$, $\Mid_i$ and $\Out_i$ in stage $t$. There are $d$ directed edges, with capacity $\beta_1$ from $d$ ``out'' vertices in previous stages to each $\In_i$. For each $i\in \mathcal{S}_t$, we put a directed edge from $\In_i$ to $\Mid_i$  with infinite capacity, and a directed edge from $\Mid_i$ to $\Out_i$ with capacity $\alpha$. For each pair of distinct indices $i,j \in\mathcal{S}_t$, we draw a directed edge from $\In_i$ to $\Mid_j$ with capacity~$\beta_2$. The edges with capacity $\beta_1$ represent the data transferred from existing storage nodes to newcomers, and the edges with capacity $\beta_2$ represent the data exchanged among the newcomers.

\item To a data collector, who shows up after $s$ repair processes have taken place, we put a vertex $\DC$ in stage $s$ and connect it with $k$ ``out'' vertices with distinct indices in stage $s$ or earlier. The capacities of these $k$ edges are set to infinity.
\end{itemize}
An example of information flow graph for $n=4$ storage nodes and $d=r=k=2$ is shown in Fig.~\ref{fig:flowgraph1}.

\begin{figure}
\centering
\includegraphics[width=2.4in]{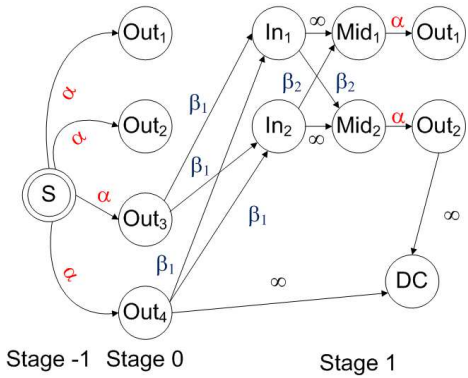}
\caption{An example of information flow graph.} \label{fig:flowgraph1}
\end{figure}

To derive a lower bound on $d\beta_1 + (r-1)\beta_2$ for a file of fixed size $B$ is equivalent to derive a upper bound of $B$ for given capacities $\beta_1$ and $\beta_2$ in the information flow graph.
So we can apply a celebrated {\em max-flow} theorem in~\cite{ACLY00}, which says the size of the data file $B$ cannot be larger than the max-flow from $\mathsf{S}$ to any data collector ($\DC$). The max-flow is the maximal value of all feasible flows from $\mathsf{S}$ to $\DC$. Here, a {\em flow} from $\mathsf{S}$ to $\DC$, called an $(\mathsf{S},\DC)$-{\em flow}, is a mapping $F$ from the set of edges to the set of non-negative real numbers, satisfying (i) for every edge $e$,  $F(e)$ does not exceed the capacity of $e$, (ii) for any vertex $v$ except the source vertex $\mathsf{S}$ and the terminal vertex $\DC$, the sum of $F(e)$ over edges $e$ terminating at $v$ is equal to the sum of $F(e)$ over edges $e$ going out from $v$,
\[
 \sum_{e=(u,v)\in\mathcal{E}} F(e) =
 \sum_{e=(v,u)\in\mathcal{E}} F(e).
\]
The {\em value} of an $(\mathsf{S},\DC)$-flow $F$ is defined as
\[
  \sum_{e=(u,\DC)\in\mathcal{E}} F(e).
\]

From the max-flow-min-cut theorem, we can upper bound the value of a flow by the capacity of a {\em cut}. For a given data collector $\DC$, an $(\mathsf{S},\DC)$-cut is a partition $(\mathcal{U}, \bar{\mathcal{U}})$ of the vertices in the information flow graph, such that $\Source \in \mathcal{U}$ and $\DC \in \bar{\mathcal{U}}$, where $\bar{\mathcal{U}}$ stands for the complement of $\mathcal{U}$ in the vertex set $\mathcal{V}$. The {\em capacity} of an $(\mathsf{S},\DC)$-cut is defined as the sum of capacities of the edges from vertices in $\mathcal{U}$ to vertices in $\bar{\mathcal{U}}$.
Next, we will use the fact that the value of any $(\mathsf{S},\DC)$-{flow} is less than or equal to the capacity of any $(\mathsf{S},\DC)$-cut,  together with the max-flow theorem in~\cite{ACLY00}, to prove the following theorem.

\begin{theorem}
If $d \geq k $, the repair-bandwidth $ d \beta_1 + (r-1)\beta_2$ is lower bounded by
\begin{equation}
\frac{B(2d+r-1)}{k(2d+r-k)} \label{eq:maxflow}
\end{equation}
and this lower bound can be met only when
\begin{equation}
 (\beta_1,\beta_2) = \Big(\frac{2B}{k(2d+r-k)}, \frac{B}{k(2d+r-k)}\Big).
 \label{eq:vertex}
\end{equation}
 \label{thm:maxflow}
\end{theorem}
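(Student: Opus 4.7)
My plan is to invoke the max-flow min-cut theorem on the information flow graph: since $B$ is at most the capacity of any $(\Source,\DC)$-cut, I will exhibit two carefully chosen cuts, extract from each a linear inequality in $\beta_1$ and $\beta_2$, and combine them to lower-bound $\gamma = d\beta_1 + (r-1)\beta_2$.

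The first cut (call it Cut~A) uses $\lceil k/r\rceil$ consecutive repair stages in which, in the clean case $r\mid k$, every newcomer---all three of its $\In$, $\Mid$, $\Out$ vertices---is placed on the sink side $\bar{\mathcal{U}}$, while $\Source$ and the original storage nodes sit on the source side. Scheduling the failures so that each newcomer draws as many of its $d$ incoming $\beta_1$-edges as possible from previously repaired (hence sink-side) nodes---always feasible because $d\geq k$---the only crossings are the $\beta_1$-edges from original nodes to newcomer $\In$ vertices, yielding $\beta_1 \geq 2B/(k(2d+r-k))$. The second cut (Cut~B) uses $k$ consecutive repair stages; in stage $t$ I designate one newcomer to sit in $\bar{\mathcal{U}}$ (all three of its vertices) and place the other $r-1$ newcomers of the stage entirely in $\mathcal{U}$. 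Connecting each designated newcomer to all $t-1$ previously designated newcomers, only $d-t+1$ of its $\beta_1$-edges cross; additionally, inside each stage the $r-1$ intra-stage $\beta_2$-edges from non-designated $\In$-vertices to the designated $\Mid$-vertex cross. Summing over $t=1,\dots,k$ gives
\[
B \;\leq\; \tfrac{k(2d-k+1)}{2}\beta_1 \;+\; k(r-1)\beta_2.
\]

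I would then rearrange the Cut~B inequality as $(r-1)\beta_2 \geq B/k - \tfrac{2d-k+1}{2}\beta_1$, substitute into $\gamma = d\beta_1+(r-1)\beta_2$ to obtain $\gamma \geq \tfrac{k-1}{2}\beta_1 + B/k$, and plug in the $\beta_1$-bound from Cut~A; the algebra collapses to (\ref{eq:maxflow}). Equality throughout forces both cuts to be simultaneously tight, which pins $(\beta_1,\beta_2)$ down to the unique pair in (\ref{eq:vertex}). The main obstacle I expect is the case $r\nmid k$, where Cut~A has a partial last stage with only $p = k - r\lfloor k/r\rfloor$ DC-connected newcomers: I would put those $p$ on the sink side and the other $r-p$ entirely on the source side, which picks up an extra correction term $p(r-p)(\beta_2 - \beta_1/2)$ in the cut capacity. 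This correction vanishes at the extremal point $\beta_1 = 2\beta_2$ and, as a short verification confirms, the convex combination of the two inequalities whose $\beta_1$ and $\beta_2$ coefficients are proportional to $d$ and $r-1$ still delivers the same final bound $B(2d+r-1)/(k(2d+r-k))$.
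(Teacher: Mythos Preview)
Your two cuts---Cut~B of type $(1,1,\dots,1)$ and Cut~A of type $(r,r,\dots,r,p)$---are exactly the two cuts the paper uses, and combining them linearly (your ``substitution'' in the clean case is just a particular linear combination) is also what the paper does; the only difference is cosmetic, in that the paper splits into the cases $k\le r$ versus $k>r$ rather than $r\mid k$ versus $r\nmid k$. Your sketch for $r\nmid k$ is correct but not yet complete: the claim that a nonnegative combination of the two inequalities has $\beta_1$- and $\beta_2$-coefficients in ratio $d:(r-1)$ is exactly the slope comparison the paper carries out in its Case~2, so you should expect to perform that verification rather than assert it.
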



\begin{proof}
Consider a data collector which downloads data from $k$ out of $n$ storage nodes.
By re-labeling the storage nodes, we can assume without loss of generality that the corresponding $k$ ``out'' vertices be $\Out_1$, $\Out_2,\ldots, \Out_{k}$.
Suppose that these $k$ ``out'' vertices belong to stages 1 to $s$, and for $\nu=1,2,\ldots, s$, there are $\ell_\nu$ ``out'' nodes in $\bar{\mathcal{U}}$ in stage~$\nu$. By vertex re-labeling, we can assume without loss of generality that $\Out_1$, $\Out_2,\ldots, \Out_{\ell_1}$ belong to stage $t_1$, $\Out_{\ell_1+1}$, $\Out_{\ell_1+2},\ldots, \Out_{\ell_1+\ell_2}$ belong to stage $t_2$, and so on. For notational convenience, we let $\ell_0=0$.
Consider the $(\mathsf{S},\DC)$-cut with $\bar{\mathcal{U}}$ consisting of the vertices
$$\bigcup_{j=\ell_0+\ell_1+\ldots+\ell_{\nu-1}+1}^{\ell_0+\ell_1+\ldots+\ell_{\nu-1}+\ell_{\nu}}  \{ \In_j, \Mid_j , \Out_j\}$$ in stage $\nu$, for
$\nu=1,2,\ldots, s$, and $\mathsf{DC}$. We say that the cut thus defined is of type $(\ell_1,\ell_2,\ldots, \ell_s)$.
An example of cut with type  $(2,1,2)$ is shown in Fig.~\ref{fig:cut}.

\begin{figure}
\centering
\includegraphics[width=2.9in]{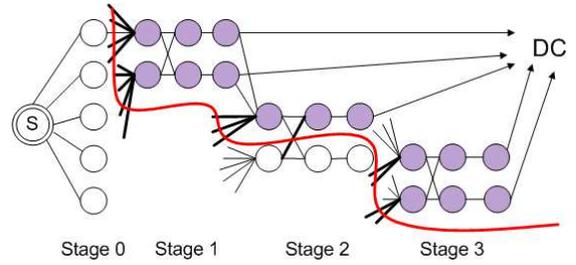}
\caption{A sample cut through the information flow graph.} \label{fig:cut}
\end{figure}

We claim that the capacity of an $(\mathsf{S},\DC)$-cut of type $(\ell_1,\ell_2,\ldots, \ell_s)$ can be as small as
\begin{equation}
 \sum_{\nu=1}^s \Big[ \ell_\nu(d-\sum_{j=1}^{\nu-1} \ell_j) \beta_1 + \ell_\nu(r-\ell_\nu)\beta_2 \Big]. \label{eq:cut}
\end{equation}
In stage 1, there are $d\ell_1$, each with capacity $\beta_1$, terminating at $\In_1$, $\In_2,\ldots, \In_{\ell_1}$.

There are $\ell_1 d$ edges, each  with capacity $\beta_1$, terminating at $\In_1$, $\In_2,\ldots, \In_{\ell_1}$ in stage 1.  Also, there are $\ell_1 (r-\ell_1)$ edges, each with capacity $\beta_2$, terminating at $\Mid_1$, $\Mid_2,\ldots, \Mid_{\ell_1}$.  Hence, a total of $\ell_1d\beta_1 + \ell_1(r-\ell_1) \beta_2$ are contributed to the summation in~\eqref{eq:cut}. This is the summand corresponding to $\nu=1$ in~\eqref{eq:cut}.

For the second group of $\ell_2$ storage nodes in stage $2$, there may be $\ell_1$ links from the first group of storage nodes, which are not counted in the capacity of the cut. The sum of capacities of edges terminating at the ``in'' vertices in $\bar{\mathcal{U}}$ in stage~2 could be as small as $\ell_2(d-\ell_1)\beta_1$. Together with the sum of capacities of the edges terminating at the ``mid'' vertices, a total of
$\ell_2(d-\ell_1)\beta_1 + \ell_2(r-\ell_2) \beta_2$ are contributed to~\eqref{eq:cut}. This is the second summand in~\eqref{eq:cut}.
The rest of the summands can be derived similarly. This finishes the proof of the claim.

For a data file of size $B$, we should be able to construct a flow of value at least $B$. Hence $B$ is less than or equal to~\eqref{eq:cut} for all type $(\ell_1,\ell_2,\ldots, \ell_s)$ with $0< \ell_\nu \leq r$ for all $\nu=1,2,\ldots, s$, and $\ell_1+\ldots +\ell_s=k$. After some algebraic manipulations, we have the following upper bound on $B$,
\begin{equation}
B \leq dk \beta_1 + rk\beta_2 - \beta_1 \sum_{i=1}^s \sum_{j=1}^{i-1} \ell_i \ell_j - \beta_2 \sum_{i=1}^s \ell_i^2.
\label{eq:cut2}
\end{equation}
We note that if we substitute $\beta_1$ and $\beta_2$ by  $2B/(k(2d+r-k))$ and $\beta_2 = B/(k(2d+r-k))$ respectively, then we have equality  in~\eqref{eq:cut2}.

We finish the proof by considering the two cases.

{\em Case 1: $k \leq r$.}  Consider the cut of type $(\underbrace{1,1,\ldots, 1}_k)$. From~\eqref{eq:cut2}, we obtain
\begin{equation}
B \leq  (dk - k(k-1)/2) \beta_1 + (rk - k)\beta_2
\label{eq:cutA}
\end{equation}
From the cut  of type $(k,0,0,\ldots, 0)$, we have the following constraint,
\begin{equation}
B \leq  dk\beta_1 + (r-k)k \beta_2 \label{eq:cutB}
\end{equation}
If we multiply~\eqref{eq:cutA} by $2d$, multiply~\eqref{eq:cutB} by $r-1$, and add the two resulting inequalities, we get
\[
(2d+r-1) B \leq k(2d+r-k)(d\beta_1 + (r-1)\beta_2).
\]
This proves the lower bound in~\eqref{eq:maxflow} in Case~1.

To see that the lower bound can be met only when $\beta_1$ and $\beta_2$ are specified as in the theorem, we notice that \eqref{eq:cutA} and \eqref{eq:cutB} define an unbounded polyhedral region in the $\beta_1$-$\beta_2$ plane, with~\eqref{eq:vertex} as a vertex point.
If we want to minimize the objective function $d\beta_1 + (r-1)\beta_2$ over all point $(\beta_1,\beta_2)$ in this region, the optimal point is precisely the point given in~\eqref{eq:vertex}.

{\em Case 2:  $ k>r$.}
Consider a cut of type $
 (\underbrace{r,r,\ldots, r}_{a},b)$,
where $a=\lfloor k/r \rfloor$ and $b=k-ra$. The upper bound of $B$ in~\eqref{eq:cut2} becomes
\begin{equation}
B \leq \Big[dk - \frac{r^2a(a-1)}{2}-abr \Big] \beta_1 + (rk-ar^2-b^2) \beta_2
\label{eq:cutC}
\end{equation}
Together with the constraint obtained from a cut of type $(1,1,\ldots, 1)$, we set up a linear program and minimize $d \beta_1 + (r-1) \beta_2$
over all $(\beta_1,\beta_2)$ satisfying the inequalities in~\eqref{eq:cutA} and~\eqref{eq:cutC}.

Let $L_1$ be the straight line in the $\beta_1$-$\beta_2$ plane by setting the inequality in~\eqref{eq:cutA} to equality. Let $L_2$ be the straight line consisting of point $(\beta_1,\beta_2)$ satisfying~\eqref{eq:cutC} with equality. We can verify that the intersection of $L_1$ and $L_2$ is the point in~\eqref{eq:vertex}.

We investigate the slopes of $L_1$ and $L_2$.
The slope of $L_1$ is equal to $-(d-(k-1)/2)/(r-1)$, which is larger than the slope of the objective function, $-d/(r-1)$. For the slope of $L_2$, we first check that
\begin{align*}
& dk - \frac{r^2a(a-1)}{2} -abr > db\\
\Longleftarrow\ & dk - {r^2a(a-1)} -abr > db\\
\Longleftrightarrow\ & d > r(a-1) + b
\ \Longleftrightarrow\  d > k-r.
\end{align*}
We have used several times that $k=ar+b$ in the above derivation.
The last line holds by the assumptions $d \geq k$ and $r\geq 2$. Therefore,
\[
-\frac{dk - \frac{r^2a(a-1)}{2} -abr}{rk-ar^2-b^2} < \frac{db}{rk-ar^2-b^2} = - \frac{d}{r-1}.
\]
The slope of $L_2$ is strictly less than the slope of the objective function. Thus, the optimal point of the linear programming problem is the vertex in~\eqref{eq:vertex}. This completes the proof of Case~2.
\end{proof}

Note: Theorem~\ref{thm:maxflow} is obtained independently in~\cite{KSS}.

We can now show that the regenerating code discussed in Section~\ref{sec:example} is optimal, in the sense that given the parameters $B$, $d$, $k$ and $r$, the repair-bandwidth matches the lower bound in Theorem~\ref{thm:maxflow}. We have $B=8$ and $d=k=r=2$ in the example. From Theorem~\ref{thm:maxflow}, the repair-bandwidth cannot be less than $8\frac{2\cdot 2+2-1}{2(2\cdot 2+2-2)} = 5$.  We have shown in Section~\ref{sec:example} that the repair process requires exactly 5 packet transmissions per failed node, and therefore matches the optimal value.

For non-cooperative or one-by-one repair, it is proved in~\cite{DGWR2010} that the minimum repair-bandwidth per failed node is $2dB/(k(2d+1-k))$,
which turns out to be the same as the left hand side of~\eqref{eq:maxflow} with $r$ set to 1. If we apply a non-cooperative regenerating code to a distributed storage system with parameters as in Section~\ref{sec:example}, the minimum repair-bandwidth is
$\frac{2\cdot 2 \cdot 8}{2(2\cdot 2+1-2)} = 5.333$. From this simple example, we can see that repair-bandwidth can be further reduced if some data exchange of data among the newcomers is allowed.

The lower bound of repair-bandwidth in Theorem~\ref{thm:maxflow} in fact holds via random linear coding with field size large enough. The tightness of the lower bound is established in~\cite{ShumHu11a},
by showing the existence of MBCR codes which match the lower bound. Thus, the minimum repair-bandwidth for MBCR is indeed equal to $B(2d+r-1)/(k(2d+r-k))$.

\section{An Explicit Construction of a Family of Optimal MBCR Codes}
\label{sec:explicit}
We construct in this section a family of exact MBCR code with parameters $d=k$ and $n=d+r$. In fact, the illustrated code in Section~\ref{sec:example} is a special case in this family.

The whole file is first divided into stripes. Each stripe consists of $B=k(2d+r-k)=kn$ data packets, considered as elements in $GF(q)$, where $q$ is a prime power.  In each stripe let the $kn$ data packets be $x_0$, $x_1, \ldots, x_{kn-1}$. We divide them into $n$ groups. The first group consists of $x_0, x_1,\ldots, x_{k-1}$, the second group consists of $x_{k}, x_{k+1}, \ldots, x_{2k-1}$, and so on. For notational convenience, we let $\mathbf{x}_j = [x_{(j-1)k} \ x_{(j-1)k+1}\ \ldots \ x_{(j-1)k+k-1}]$ be the vector of the  data packets in the $j$th group ($1\leq j \leq n$).

For $i=1,2,\ldots, n$, we construct the content of node $i$ as follows. We first put the $k$ data packets in the $i$-th group $\mathbf{x}_i$ into node $i$ and then $n-1$ parity-check packets
\[
\mathbf{x}_{i\oplus 1}\cdot \mathbf{v}_1,\
\mathbf{x}_{i\oplus 2}\cdot \mathbf{v}_2, \ \ldots,\
\mathbf{x}_{i\oplus (n-1)}\cdot \mathbf{v}_{n-1}
\]
into node $i$, where ``$\cdot$'' is the dot product of vectors and $\oplus$ is modulo-$n$ addition defined by
\[
 x \oplus y := \begin{cases}
 x+y & \text{ if } x+y \leq n, \\
 x+y-n & \text{ if } x+y > n.
 \end{cases}
\]
Here $\mathbf{v}_j$ ($j=1,2,\ldots, n-1$) are column vectors in a $k\times (n-1)$   generating matrix,  $\mathbf{G} = [\mathbf{v}_1\ \mathbf{v}_2\ \ldots\ \mathbf{v}_{n-1}]$, of a maximal-distance separable (MDS) code over $GF(q)$ of length $n-1$ and dimension $k$. By the defining property of MDS code, any $k$ columns of $\mathbf{G}$ are linearly independent of $GF(q)$.

As for the file reconstruction processing, suppose without loss of generality that a data collector  connects to nodes 1, $2, \ldots, k$.  The systematic packets $x_0, x_1, \ldots, x_{k^2-1}$ in the first $k$ groups can be downloaded directly, because they are stored in node 1 to node $k$ uncoded. The $j$th group of data packets ($j > k$) (the components in vector $\mathbf{x}_j$) can be reconstructed from
$\mathbf{x}_j \cdot \mathbf{v}_{j-1}$, $\mathbf{x}_j \cdot \mathbf{v}_{j-2}, \ldots,  \mathbf{x}_j \cdot \mathbf{v}_{j-k}$,
by the MDS property.
A data collector connecting to any other $k$ storage nodes can decode similarly.

As for the cooperative repair processing, suppose without loss of generality that nodes $k+1$ to $n$ fail at the same time. The repair process proceeds as follows.

\begin{description}
\item[\hspace{-10pt} Step 1:] For $i=1,2,\ldots, k$, node $i$ computes $\mathbf{x}_i \cdot \mathbf{v}_{n+i-j}$ and sends it
to newcomer $j$, for $j=k+1, k+2, \ldots, n$. 

\item[\hspace{-10pt} Step 2:] For $j=k+1, k+2, \ldots,n $, newcomer $j$ downloads $k$ packets $\mathbf{x}_{j}\cdot \mathbf{v}_{j-1}$, $\mathbf{x}_{j}\cdot \mathbf{v}_{j-2}, \ldots, \mathbf{x}_{j}\cdot \mathbf{v}_{j-k}$
    from nodes 1 to $k$.

\item[\hspace{-10pt} Step 3:] For $j=k+1, k+2, \ldots,n $, newcomer $j$ can solve for the systematic packets in $\mathbf{x}_j$. Then node $j$ sends $\mathbf{x}_j \cdot \mathbf{v}_{j+i}$ to node $n-i+1$, for $i=1,2,\ldots, n-j$, and sends  $\mathbf{x}_j \cdot \mathbf{v}_{i}$ to node $j-i$, for $i=1,2,\ldots, j-k-1$.
 \label{step3}
\end{description}
In steps 1 and 2,  a total of $2k(n-k)=2kr$ packets are transmitted. In step~3, each newcomer transmits $r-1$ packets. The total number of packets required in the whole repair process is
$2kr + r(r-1) = r(2d+r-1)$.
The number of packets per failed node is therefore $2d+r-1$. According to Theorem~\ref{thm:maxflow}, the repair-bandwidth is no less than
$$ B\frac{2d+r-1}{k(2d+r-k)} =  2d+r-1.$$
Thus, this regenerating code is optimal.

{\em Remark}: If $n = q+2$ for some prime power $q$, we can use an extended Reed-Solomon (RS) code of length $q+1$ in the construction. The alphabet size could be as small as $n-2$. We refer the reader to~\cite{Roth} for the construction of extended RS code.

{\em Example:}
An example for $n=5$, $d=k=3$ and $r=2$ is shown in Fig.~\ref{fig:construction}. A stripe of file data is divided into 15 packets $x_0$, $x_1, \ldots, x_{14}$. Let $q=2$ and $\mathbf{G}$ be the generating matrix
\[
 \mathbf{G}  = [\mathbf{v}_1\ \mathbf{v}_2\ \mathbf{v}_3\ \mathbf{v}_4] = \begin{bmatrix}
 1&1&0&0 \\
 1&0&1&0 \\
 1&0&0&1 \\
 \end{bmatrix}
\]
of a triply-extended Reed-Solomon code over $GF(2)$~\cite{Roth}. The $i$th row of the array in  Fig.~\ref{fig:construction} indicates the content of node~$i$. For example,
node 4 stores six systematic packets, $x_9$, $x_{10}$, $x_{11}$ in $\mathbf{x}_4$, $\mathbf{x}_1 \cdot \mathbf{v}_2 = x_0$, $\mathbf{x}_2\cdot \mathbf{v}_3 = x_4$, $\mathbf{x}_3 \cdot \mathbf{v}_4 = x_{8}$, and one parity-check packet $\mathbf{x}_5 \cdot \mathbf{v}_1 = x_{12}+x_{13}+x_{14}$.

\begin{figure}
\[ {\small
\begin{array}{|c||c|c|c|c|c|} \hline
 \text{Node 1}  & \mathbf{x}_1 & \mathbf{x}_2 \cdot \mathbf{v}_1 & \mathbf{x}_3 \cdot \mathbf{v}_2 &\mathbf{x}_4 \cdot \mathbf{v}_3 & \mathbf{x}_5 \cdot \mathbf{v}_4\\ \hline
\text{Node 2} &  \mathbf{x}_1 \cdot \mathbf{v}_4 &\mathbf{x}_2 & \mathbf{x}_3 \cdot \mathbf{v}_1 & \mathbf{x}_4 \cdot \mathbf{v}_2 &\mathbf{x}_5 \cdot \mathbf{v}_3 \\ \hline
\text{Node 3} &  \mathbf{x}_1 \cdot \mathbf{v}_3 & \mathbf{x}_2\cdot \mathbf{v}_4 & \mathbf{x}_3  & \mathbf{x}_4 \cdot \mathbf{v}_1 &\mathbf{x}_5 \cdot \mathbf{v}_2 \\ \hline
\text{Node 4} &  \mathbf{x}_1 \cdot \mathbf{v}_2 & \mathbf{x}_2\cdot \mathbf{v}_3 & \mathbf{x}_3 \cdot \mathbf{v}_4 & \mathbf{x}_4  &\mathbf{x}_5 \cdot \mathbf{v}_1 \\ \hline
\text{Node 5} &  \mathbf{x}_1 \cdot \mathbf{v}_1 & \mathbf{x}_2\cdot \mathbf{v}_2 & \mathbf{x}_3 \cdot \mathbf{v}_3 & \mathbf{x}_4 \cdot \mathbf{v}_4  &\mathbf{x}_5  \\ \hline
\end{array} }
\]
\caption{An example for exact MBCR code for $n=5, d=k=3$ and $r=2$.}
\label{fig:construction}
\end{figure}

Suppose that nodes 4 and 5 fail. In the first step, node 1 sends $\mathbf{x}_1\cdot \mathbf{v}_2$ and $\mathbf{x}_1 \cdot \mathbf{v}_1$ to newcomers 4 and 5 respectively.
Similarly, node 2 sends $\mathbf{x}_2 \cdot \mathbf{v}_3$ and $\mathbf{x}_2 \cdot \mathbf{v}_2$, and
node 3 sends $\mathbf{x}_3 \cdot \mathbf{v}_4$ and $\mathbf{x}_3 \cdot \mathbf{v}_3$.
In the second step, node 1 transmits $\mathbf{x}_4 \cdot \mathbf{v}_3$ and $\mathbf{x}_5\cdot \mathbf{v}_4$ to newcomers 4 and 5 respectively. Likewise, node 2 transmits $\mathbf{x}_4 \cdot \mathbf{v}_2$ and $\mathbf{x}_5  \cdot \mathbf{v}_3$, and node 3 transmits $\mathbf{x}_4  \cdot \mathbf{v}_1$ and $\mathbf{x}_5  \cdot \mathbf{v}_2$. In the third step, newcomer 4 reconstructs $\mathbf{x}_4$, and sends $\mathbf{x}_4 \cdot \mathbf{v}_4$ to newcomer~5.
Also, newcomer 5 reconstructs  $\mathbf{x}_5$, and sends $\mathbf{x}_5 \cdot \mathbf{v}_1$ to newcomer~4. Lastly, the lost packets in nodes 4 and 5 are regenerated in newcomer 4 and 5. The total number of packet transmissions in the whole repair process is equal to 14. The repair-bandwidth per failed node is~7. It matches the theoretic lower bound $15(2\cdot 3+2-1)/(3(2\cdot 3+2-3)) = 7$.

\section{Conclusion}

We give a construction of a family of exact and optimal MBCR codes for $d=k$ and $n=d+r$. The constructed regenerating code has the advantage of being a systematic code.  For example, if we want to look at the content of one particular packet, we only need to contact the node which has a copy of this packet and download the packet directly. Another advantage of this construction is that the requirement of finite field size grows linearly as a function of the number of storage nodes.




\end{document}